\journal{J. Comp. Phys.}
\newtheorem{theorem}{Theorem}[section]
\newtheorem{lemma}[theorem]{Lemma}
\begin{document}

\begin{frontmatter}

\title{Random Batch Sum-of-Gaussians Method for Molecular Dynamics of Born-Mayer-Huggins Systems}

\author{Chen Chen}\ead{ronnie7@sjtu.edu.cn} 
\author{Jiuyang Liang} %\ead{liangjiuyang@sjtu.edu.cn}
\author{Zhenli Xu} %\corref{corrsponding author} \ead{xuzl@sjtu.edu.cn}
\author{Qianru Zhang} %\ead{zhangqianru10@hisilicon.com}

\begin{abstract}
The Born-Mayer-Huggins (BMH) potential, which combines Coulomb interactions with dispersion and short-range exponential repulsion, is widely used for ionic materials such as molten salts. However, large-scale molecular dynamics simulations of BMH systems are often limited by computation, communication, and memory costs. We recently proposed the random batch sum-of-Gaussians (RBSOG) method, which accelerates Coulomb calculations by using a sum-of-Gaussians (SOG) decomposition to split the potential into short- and long-range parts and by applying importance sampling in Fourier space for the long-range part. In this work, we extend the RBSOG to BMH systems and incorporate a random batch list (RBL) scheme to further accelerate the short-range part, yielding a unified framework for efficient simulations with the BMH potential. The combination of the SOG decomposition and the RBL enables an efficient and scalable treatment of both long- and short-range interactions in BMH system, particularly the RBL well handles the medium-range exponential repulsion and  dispersion by the random batch neighbor list. Error estimate is provided to show the theoretical convergence of the RBL force. We evaluate the framework on molten NaCl and mixed alkali halide with up to $5\times10^6$ atoms on $2048$ CPU cores. Compared to the Ewald-based particle-particle particle-mesh method and the RBSOG-only method, our method achieves approximately $4\sim10\times$ and $2\times$ speedups while using $1000$ cores, respectively, under the same level of structural and thermodynamic accuracy and with a reduced memory usage. These results demonstrate the attractive performance of our method in accuracy and scalability for MD simulations with long-range interactions.
\end{abstract}
\begin{keyword}
Born-Mayer-Huggins potential, molten salts, molecular dynamics simulations, sum-of-Gaussians, random batch list
\end{keyword}
\end{frontmatter}

\section{Introduction}
Molten alkali halides, inorganic liquids composed of alkali-metal cations and halide anions, are important electrolytes and heat-transfer fluids in advanced energy systems due to their high thermal stability, strong ionic conductivity, and chemical durability. Since their early use as reactor coolants at Oak Ridge National Laboratory in the 1960s~\cite{haubenreich1964msre,haubenreich1970experience}, molten salts have remained central to Generation-IV reactor designs, motivating extensive computational studies to understand the structure and transport properties of molten salts~\cite{defever2020melting,siemer2015molten}. Due to experimental challenges such as corrosion, extreme temperatures, and high costs, molecular dynamics (MD) simulation plays a central role in predicting the atomic-scale structure, transport, and thermodynamic properties of these systems~\cite{wang2014molecular,wang2015molecular,yang2023new,song2022molecular}.

Molten alkali halides are dense ionic liquids in which long-range electrostatics, short-range Pauli repulsion, and dispersion all matter. To reproduce the experimental structure and transport, dedicated force fields are required. The Born-Mayer-Huggins (BMH) pairwise potential~\cite{born1932gittertheorie,huggins1937lattice} is often employed for the task:
\begin{equation}\label{eq:BMH}
	U_{\text{BMH}}(r_{ij})=\frac{q_iq_j}{r_{ij}}+A_{ij}\, \exp\left(\dfrac{\sigma_{ij}-r_{ij}}{\rho}\right)-\dfrac{C_{ij}}{r_{ij}^6}-\dfrac{D_{ij}}{r_{ij}^8}.
\end{equation}
Here, $r_{ij}$ denotes the distance between ions $i$ and $j$; $q_i$ is the ionic charge; the exponential term models short-range Pauli repulsion with amplitude $A_{ij}$, range parameter $\rho$, and $\sigma_{ij}:=\sigma_i+\sigma_j$ being the addition of ionic radii; and the last two terms describe dipole-dipole and dipole-quadrupole dispersion with coefficients $C_{ij}$ and $D_{ij}$. Despite their broad use in molten-salt MD simulations~\cite{porter2022computational,galashev2023molecular,shore2000simulations}, large-scale BMH simulations remain computationally costly. Several fundamental bottlenecks hinder large-scale BMH simulations. The long-range Coulomb term, typically handled by communication-intensive Ewald-type solvers with fast Fourier transforms (FFTs)~\cite{hockney2021computer,darden1993particle}, introduces significant scalability limitations~\cite{ayala2021scalability}. Meanwhile, the exponential repulsion and dispersion terms exhibit slow algebraic decay, representing a medium-range interaction that requires a large real-space cutoff and converges slowly in Fourier space. Moreover, periodic fast multipole methods (FMMs)~\cite{yan2018flexibly,pei2023fast} further introduce complexity by splitting the periodic system into near‐field and far‐field regions; the near‐field interactions must be evaluated over the $27$ neighboring image cells, which is itself computationally costly. Recently proposed dual-space multilevel kernel-splitting (DMK) framework~\cite{jiang2025dual} and Ewald summation with prolates (ESP) method~\cite{liang2025accelerating} provide promising alternatives, but have not yet been adapted to BMH systems. These factors collectively limit both the accessible time scale and system size in BMH-based MD simulations.

In recent years, stochastic algorithms have emerged and improved scalability by replacing global kernel sums with unbiased mini-batch estimators, and their accuracy has been rigorously established for interacting particle systems~\cite{jin2020random,jin2023ergodicity,ye2024error,cai2024convergence,lin2024hybrid}. For electrostatics, random batch Ewald (RBE)~\cite{jin2021random,liang2022superscalability,liang2022random} avoids global FFTs via importance sampling while keeping linear scaling. Since the construction of an SOG approximation has been extensively studied~\cite{greengard2018anisotropic,liang2020kernel,beylkin2005approximation}, the random batch sum-of-Gaussians (RBSOG) method~\cite{liang2023random,chen2025random} has been proposed to replace the Ewald split with a sum-of-Gaussians (SOG) decomposition of $1/r$, removing the discontinuity of the Ewald split and further reducing batch size. However, the other terms of the BMH potential (exponential repulsion and dispersion) can remain the dominant cost because of their medium-range nature, creating a bottleneck that is not addressed by long-range solvers alone. The random batch list (RBL) scheme, developed to accelerate the short-range potentials such as the Lennard‑Jones (LJ) potential~\cite{liang2021random} and the embedded atom potential~\cite{zhang2025random}, reduces the neighbor‑list cost by randomly sampling particles within a cutoff radius. The improved random batch Ewald (IRBE) approach~\cite{liangIRBE} combines RBE with the LJ-based RBL. Notably, the RBMD package~\cite{gao2026rbmd} implements the IRBE on GPU-CPU heterogeneous architectures, enabling efficient all-atom simulations of up to ten million particles on a single GPU. However, IRBE has not been applied to BMH systems and does not take advantage of the more efficient SOG decomposition. A linear-scaling method that treats all components of the BMH potential in a stochastic framework is of great significance for practical simulations. 

In order to overcome the challenge of handling medium-range interactions in BMH systems, we propose a unified framework, the improved random batch sum-of-Gaussians (IRBSOG) method, which seamlessly couples the RBSOG and RBL schemes. The key contributions of this work are threefold. First, we develop an integrated approach for the BMH potential in which the Coulomb kernel is split via a sum-of-Gaussians (SOG) approximation, with the long-range part efficiently evaluated via RBSOG in Fourier space using random batch importance sampling, while the short-range residual, together with the exponential repulsion and dispersion terms, is handled by a core-shell RBL scheme in real space. This synergistic combination specifically targets the multi-scale nature of the BMH potential, enabling scalable and accurate treatment of both interaction regimes. Second, recognizing that short-range interactions in BMH systems exhibit slower decay and require larger cutoffs compared to Lennard-Jones-type potentials, we systematically investigate and optimize the RBL implementation under these conditions. An adaptive, species-dependent cutoff strategy is introduced to account for size differences between alkali-metal cations and halogen anions, effectively reducing neighbor counts and memory usage. Third, we performed systematic MD simulations across a range of BMH systems--including molten NaCl and mixed alkali halides--to thoroughly validate the accuracy, scalability, and transferability of our approach. The proposed IRBSOG framework achieves $O(N)$ complexity with significantly reduced communication and memory overhead compared to both the particle-particle particle-mesh (PPPM) and the RBSOG-only methods. In large-scale benchmarks, it reproduces key structural, dynamical, and thermodynamic properties while delivering up to an order-of-magnitude speedup over PPPM and a roughly twofold improvement over RBSOG-only simulations for systems of up to $5\times10^6$ atoms on $2048$ CPU cores.

% We address this with a unified framework for BMH systems, coupling RBSOG and RBL. First, we apply an SOG decomposition to the Coulomb kernel to form a smooth long-range term and a short-range residual. The long-range term is an SOG series, and is evaluated in Fourier space using random batch importance sampling. Second, we apply RBL with a core-shell neighbor structure to the consolidated short-range part, which includes the exponential repulsion, dispersion, and the residual Coulomb term. Because alkali-metal cations are generally smaller than halogen anions, we use an adaptive scheme to set species-dependent cutoffs. This design lowers neighbor counts and memory, avoids global FFTs, and preserves unbiased force estimates. The resulting framework achieves $O(N)$ complexity with lower communication and memory than particle-particle particle-mesh (PPPM) and RBSOG-only methods. In benchmarks on molten NaCl and mixed alkali halides, our framework reproduces radial distribution functions and key transport and thermodynamic properties, while delivering up to an order of magnitude speedup over PPPM and about a two-fold gain over RBSOG-only method for systems up to $5\times10^6$ atoms on $2048$ CPU cores.

The remainder of this paper is organized as follows. In Section~\ref{sec:II}, we present the theoretical foundation of the SOG decomposition and describe the new framework for the BMH potential. Section~\ref{sec:III} validates the accuracy and performance through systematic simulations of molten NaCl and alkali binary halide systems. Conclusion remarks are provided in Section~\ref{sec:IV}.

\section{Method}\label{sec:II}
In this section, we introduce a unified framework for accelerating MD simulations with the BMH potential. Section~\ref{sec:II.A} proposes a new decomposition of the BMH kernel using the SOG approximation of Coulomb kernel. Section~\ref{sec:II.C} details the framework and its implementation. Section~\ref{sec:II.D} presents complexity and convergence analysis.

\subsection{SOG decomposition of the BMH potential}\label{sec:II.A}
Consider a system of $N$ charged particles located at positions $\{\boldsymbol{r}_i = (x_i,y_i, z_i)\}_{i=1}^{N}$ with charges $\{q_i\}_{i=1}^{N}$ in a cuboid box of side length $L_x$, $L_y$, and $L_z$. We assume the charge neutrality condition, $\sum_{i=1}^{N}q_i=0$. Periodic boundary conditions are imposed to mimic bulk environments. The total energy of the system is given by
\begin{equation}
U=\frac{1}{2}\sum_{i,j=1}^N\sum_{\bm{n}\in\mathbb{Z}^3}\,^{\prime}U_{\text{BMH}}\left(|\bm{r}_{ij}+\bm{n}\circ \bm{L}|\right),
\end{equation}
where $U_{\mathrm{BMH}}$ is given by Eq.~\eqref{eq:BMH}, $\bm{r}_{ij}=\bm{r}_i-\bm{r}_j$, $\bm{L}=(L_x,L_y,L_z)$, the prime indicates that $\bm{n}=\bm{0}$ is excluded when $i=j$, and ``$\circ$'' denotes elementwise multiplication. The exponential repulsion and dispersion terms in $U_{\mathrm{BMH}}$ decay rapidly and are evaluated in real space with a cutoff $r_c$. In contrast, the Coulomb contribution is only conditionally convergent, so direct truncation is not valid;  in mainstream MD packages it is typically treated by FFT-based mesh-Ewald methods~\cite{hockney2021computer,darden1993particle}.

Here, we treat the Coulomb part of the BMH potential with an SOG decomposition instead of the classical Ewald decomposition, following the idea of the RBSOG \cite{liang2023random}. We start from the integral representation of a power-law kernel:
\begin{equation}
\dfrac{1}{r^{2\alpha}} = \frac{1}{\Gamma(\alpha)}\int^{\infty}_{-\infty} e^{-e^{\tau}r^2 + \alpha\tau}d\tau,
\end{equation}
where $\alpha>0$ and $\Gamma(\cdot)$ denotes the Gamma function. Setting $\alpha=1/2$ gives the representation of the $1/r$ kernel. Introducing the change of variables $\tau=\ln(x^2/2s^2)$ and using geometrically distributed quadrature nodes $x=b^{-m}$ yields the bilateral SOG approximation (BSA):
\begin{equation}\label{eq::1-r}
\frac{1}{r} \approx \frac{2\ln b}{\sqrt{2\pi s^2}}\sum^{\infty}_{m=-\infty}b^{-m}e^{-\frac{r^2}{2b^{2m} s^2}},
\end{equation}
where $b>1$ is a constant and $s$ controls the bandwidth of Gaussians. A key property of Eq.~\eqref{eq::1-r} is a uniform pointwise relative error bound ($b\to 1$ for all $r>0$)~\cite{beylkin2010approximation}:
\begin{equation}\label{eq::asymperror}
    \left\vert 1-\frac{2r\ln b}{\sqrt{2\pi s^2}}\sum_{m=-\infty}^{\infty}b^{-m}e^{-\frac{r^2}{2b^{2m} s^2}}\right\vert\lesssim2\sqrt{2}e^{-\frac{\pi^2}{2\ln b}}. 
\end{equation}      
Other kernel-independent SOG methods~\cite{greengard2018anisotropic,gao2022kernel} can also be used to construct SOG approximations of $1/r$. By using the BSA, the so-called u-series method~\cite{predescu2020u} splits $1/r$ into short- and long-range parts,
\begin{equation}\label{eq::1r}
\frac{1}{r}\rightarrow\mathcal{N}_b^{s}(r)+\mathcal{F}_b^{s}(r),
\end{equation}
where the long-range part
\begin{equation}\label{eq:far}
    \mathcal{F}_b^{s}(r)=\sum_{m=0}^Mw_{m}e^{-\frac{r^2}{2b^{2m} s^2}}
\end{equation}
is the BSA series truncated to $0\leq m\leq M$ with $w_{m}=\sqrt{2}\ln b/(\sqrt{\pi} b^{m}s)$, and
the short-range component is
\begin{equation}\label{eq::short}
\mathcal{N}_{b}^{s}(r)=\begin{cases}
		\dfrac{1}{r}-\mathcal{F}_{b}^{s}(r),\quad &\text{if }r< r_c,\\[2.2em]0,&\text{if }r\geq r_c,
	\end{cases}
\end{equation}
with cutoff $r_c$ chosen as the smallest root of $r\,\mathcal{F}_b^{s}(r)=1$ to ensure continuity of the potential. For MD, one typically enforces $C^1$ continuity (continuous force) via
\begin{equation}
	\dfrac{1}{r_c^2}-\left.\partial_r\mathcal{F}_b^{s}(r)\right|_{r=r_c}=0.
\end{equation} 
For fixed $b$ and $s$, this can be satisfied by adjusting the coefficient of the narrowest Gaussian
\begin{equation}
	w_0=\dfrac{1}{e^{-{r_c^2}/{(2b^{2m} s^2})}}\left[\dfrac{1}{r_c}-\mathcal{F}_b^{s}(r_c)\right],
\end{equation}
and then solving the potential-continuity condition to determine $r_c$. This redefinition of the narrowest Gaussian weight is necessary to avoid large local errors. Thanks to the increased smoothness of the resulting splitting, the SOG decomposition can achieve accuracy comparable to Ewald-based methods with reduced computational cost, while maintaining long-term energy stability in NVE simulations~\cite{chen2025random,predescu2020u}. Moreover, Gaussian functions are separable and can be expressed as products of one-dimensional functions, a property that can be exploited to reduce the number of sequential communication stages in FFT-based acceleration by roughly a factor of two~\cite{predescu2020u}. These features have motivated the adoption of SOG-based decompositions in a range of applications, including quantum mechanics/molecular mechanics calculations~\cite{laino2005efficient}, machine-learning force fields~\cite{ji2025machine}, and Schr\"{o}dinger equation solvers~\cite{zhou2025sum}.

In this work, we build on the idea of u-series to split the BMH energy into short- and long-range parts: 
\begin{equation}
U=U_{\mathcal{N}}+U_{\mathcal{F}},
\end{equation}
where the long-range part 
\begin{equation}\label{eq:UfBMH}
U_{\mathcal{F}}=\frac{1}{2} \sum_{\boldsymbol{n}}\,^{\prime} \sum_{i, j} q_i q_j \mathcal{F}_{b}^{s}\left(\left|\boldsymbol{r}_{i j}+\boldsymbol{n} \circ \boldsymbol{L}\right|\right)
\end{equation}
is a sum of SOG series, and the remaining short-range part is given by
\begin{equation}
    U_{\mathcal{N}}= \frac{1}{2} \sum_{\boldsymbol{n}}\,^{\prime} \sum_{i, j} \biggl[q_i q_j\, 
        \mathcal{N}_{b}^{s}\bigl(\bigl|\boldsymbol{r}_{i j} + \boldsymbol{n} \circ \boldsymbol{L}\bigr|\bigr)+ \mathcal{B}_{\text{BMH}}\bigl(\bigl|\boldsymbol{r}_{i j} + \boldsymbol{n} \circ \boldsymbol{L}\bigr|\bigr) \biggr].
\end{equation}
% \begin{equation}
% \begin{aligned}
%     U_{\mathcal{N}}= \frac{1}{2} \sum_{\boldsymbol{n}}\,^{\prime} \sum_{i, j} \biggl[q_i q_j\, 
%         & \mathcal{N}_{b}^{s}\bigl(\bigl|\boldsymbol{r}_{i j} + \boldsymbol{n} \circ \boldsymbol{L}\bigr|\bigr) \\[-0.8ex]
%         & + \mathcal{B}_{\text{BMH}}\bigl(\bigl|\boldsymbol{r}_{i j} + \boldsymbol{n} \circ \boldsymbol{L}\bigr|\bigr) \biggr]
% \end{aligned}
% \end{equation}
The function $\mathcal{B}_{\mathrm{BMH}}(r)$ contains the exponential repulsion and dispersion terms,
\begin{equation}\label{eq::short2}
\mathcal{B}_{\text{BMH}}(r)=A_{ij}\, \exp\left(\dfrac{\sigma_{ij}-r}{\rho}\right)-\dfrac{C_{ij}}{r^6}-\dfrac{D_{ij}}{r^8}.
\end{equation}
Then the short-range force exerted on the $i$th particle is expressed by
\begin{equation}\label{eq::Fshort}
\bm{F}_{\mathcal{N},i} =\sum_{j\in\mathbb{I}(i)}\left[q_iq_j\left( \dfrac{1}{r_{ij}^3}-\sum_{m=0}^M\frac{w_m}{b^{2m}s^2}e^{-r_{ij}^2/(2b^{2m}s^2)}  \right)+\dfrac{A_{ij}}{\rho r_{ij}}e^{\frac{\sigma_{ij}-r_{ij}}{\rho}}-\dfrac{6C_{ij}}{r_{ij}^8}-\dfrac{8D_{ij}}{r_{ij}^{10}}\right]\bm{r}_{ij}.
\end{equation}
After this split, the short-range part can be truncated at a cutoff \(r_c\), while the smooth long-range part is evaluated in Fourier space.

Let us define the Fourier transform pairs as 
\begin{equation}
    \widehat{f}(\bm{k}):=\int_{\Omega}f(\bm{r})e^{-i\bm{k}\cdot\bm{r}}d\bm{r}
\end{equation}
and
\begin{equation}
f(\bm{r})=\frac{1}{V}\sum_{\bm{k}}\widehat{f}(\bm{k})e^{i\bm{k}\cdot\bm{r}},
\end{equation}
where $\bm{k}=2\pi(m_x/L_x,m_y/L_y,m_z/L_z)$ and $\bm{m}=(m_x,m_y,m_z)\in\mathbb{Z}^3$. By applying the Poisson summation formula to Eq.~\eqref{eq:UfBMH}, the long-range BMH energy can be split into three parts:
\begin{equation}\label{eq::UF}
    U_{\mathcal{F}}=	U_{\mathcal{F}}^{\text{*}}+	U_{\mathcal{F}}^{\text{0}}-	U_{\mathcal{F}}^{\text{self}},
\end{equation}
where
\begin{equation}
U_{\mathcal{F}}^{\text{*}}=\sum_{|\bm{k}|\neq0}\widehat{\mathcal{F}}_b^{s}(\bm{k})\frac{|\rho(\bm{k})|^2}{2V}  
\end{equation}
is the contribution from non-zero modes,
\begin{equation}
U_{\mathcal{F}}^{\text{0}}=\lim_{\bm{k}\to0}\widehat{\mathcal{F}}_b^{s}(\bm{k})\frac{|\rho(\bm{k})|^2}{2V}
\end{equation}
is the zero-mode contribution, and the last term $U_{\mathcal{F}}^{\mathrm{self}}$ is added to remove the unwanted self interaction. Here, 
\begin{equation}\label{eq:Fouriertransform}
\widehat{\mathcal{F}}^{s}_b(\bm{k}) = (2\pi)^{3/2}\sum_{m=0}^{M} w_m b^{3m} s^{3}e^{-b^{2m} s^{2} k^{2}/2}
\end{equation}
is the Fourier transform of $\mathcal{F}^{s}_b(r)$ with $k=|\bm{k}|$, and 
\begin{equation}
\rho(\bm{k}):=\sum_{i=1}^Nq_ie^{i\bm{k}\cdot\bm{r}_i}
\end{equation}
denotes the charge structure factor. In this paper, we assume tinfoil boundary conditions so that $U_{\mathcal{F}}^{0}\equiv0$. More detailed discussions regarding the zero-mode term can be found in the literature~\cite{liang2023random}. The long-range force acting on the $i$th particle is given by 
\begin{equation}\label{eq::Long}
\bm{F}_{\mathcal{F},i} = -\nabla_{\bm{r}_i}U_{\mathcal{F}}=-\sum_{\bm{k}\neq 0} \frac{q_i\bm{k}}{V} 
    \widehat{\mathcal{F}}_b^{s}(\bm{k}) \text{Im} \left( e^{-i\bm{k} \cdot \bm{r}_i} \rho(\bm{k}) \right).
\end{equation}
Note that the last term in Eq.~\eqref{eq::UF} has no contribution to the force, i.e. $\nabla_{\bm{r}_i}U_{\mathcal{F}}^{\text{self}}=0$, since the particle number is typically invariant in the MD simulations. 

%\tcr{discuss the difficulty of BMH calculations}

\subsection{The IRBSOG method}\label{sec:II.C}
In this section, we introduce the improved random-batch sum-of-Gaussians (IRBSOG) method to simplify the calculation of the BMH potential. Based on the SOG-split formulation, the short-range and long-range forces are treated separately using tailored random-batch type algorithms. 

%\subsubsection{Core-shell random batch list for short-range forces}

\begin{figure}[!ht]
\centering	
\includegraphics[width=0.65\linewidth]{./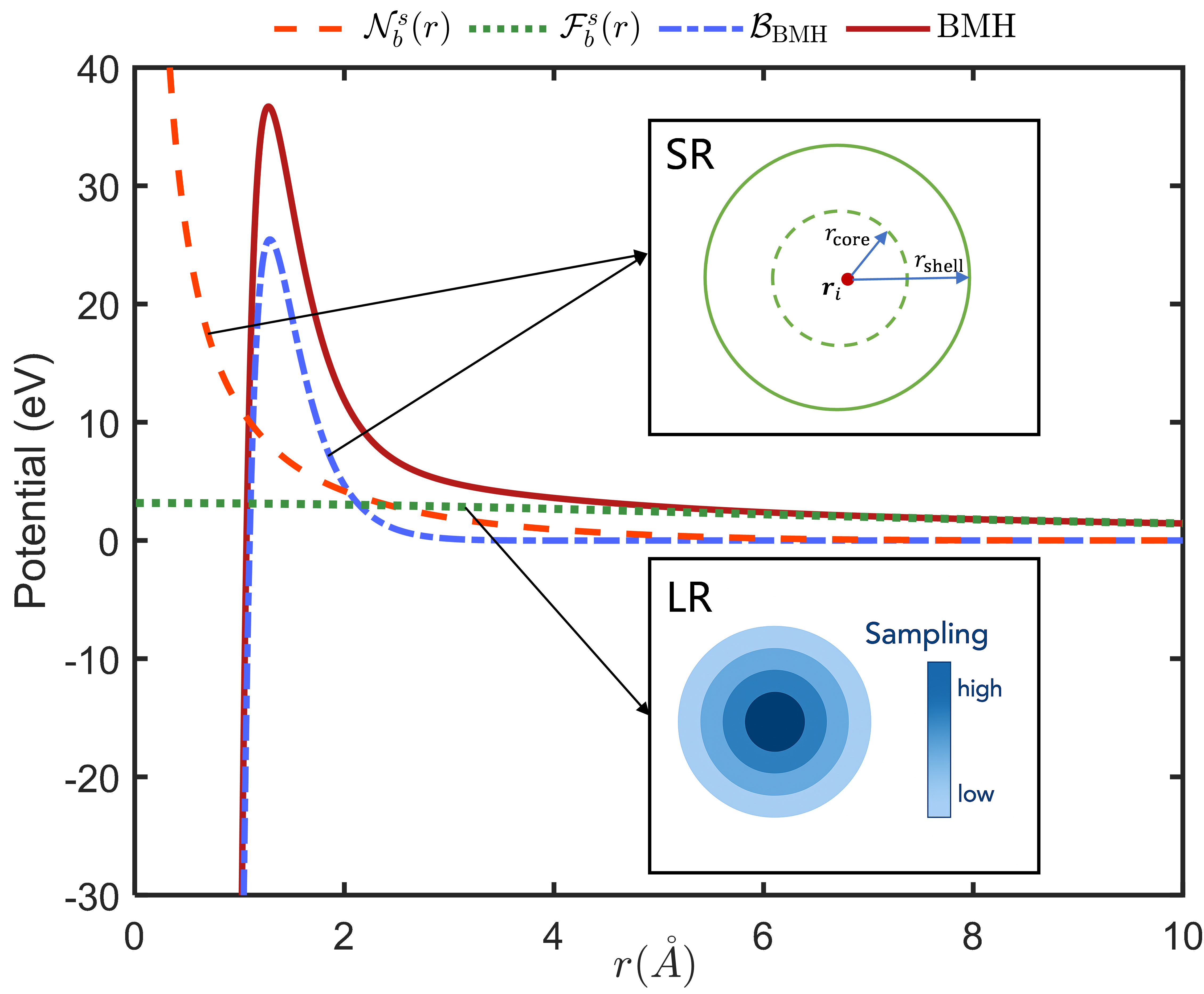}
\captionsetup{justification=raggedright, singlelinecheck=false}
\caption{Schematic of the IRBSOG method and the Born-Mayer-Huggins (BMH) potential decomposition. The total BMH potential (red solid line) is decomposed into components handled by distinct algorithms. The Coulomb potential is split into a short-range part $\mathcal{N}_{b}^{s}(r)$ (orange dashed line) and a long-range part $\mathcal{F}_{b}^{s}(r)$ (green dot-dashed line). The short-range non-Coulombic BMH terms $\mathcal{B}_{\text{BMH}}(r)$ (blue dot-dashed line) is also illustrated. The upper inset shows the core-shell structure of short-range and the lower one shows the importance sampling of long-range.}
\label{fig::decomposition}
\end{figure}

To accelerate the short-range force $\bm{F}_{\mathcal{N},i}=-\nabla_{\bm{r}_i}U_{\mathcal{N}}^{\text{BMH}}$, we adopt the RBL framework~\cite{liang2021random} with a dual-cutoff scheme. As illustrated in the inset of Fig.~\ref{fig::decomposition}, the neighbor list of each particle is split into a core-shell structure with radii $r_{\text{core}}<r_{\text{shell}}$. The short-range force is decomposed as
$\bm{F}_{\mathcal{N},i}=\bm{F}_{i,\text{core}}+\bm{F}_{i,\text{shell}}$, where the core contribution is computed by direct summation:
\begin{equation}
\bm{F}_{i,\text{core}}=\sum_{r_{ij}<r_{\text{core}}}\bm{f}_{ij},
\end{equation}
where $\bm{f}_{ij}$ is the force exerted by particle $j$ on particle $i$. In the shell region, we randomly sample a mini-batch $B(i)$ of $P_R$ neighbors of particle $i$ to construct a stochastic estimator:
\begin{equation}
\bm{F}_{i,\text{shell}}^*=\frac{N_{\text{shell}}}{P_R}\sum_{j\in B(i)}\bm{f}_{ij},
\end{equation}
where $N_{\text{shell}}$ is the number of neighbors with $r_{\text{core}}<r_{ij}<r_{\text{shell}}$ (and $\bm{f}_{ij}=0$ otherwise). A momentum correction term $\bm{F}^{\text{corr}}$ is applied to ensure the conservation of total momentum:
\begin{equation}
\bm{F}^{\text{corr}}=\frac{1}{N}\sum_{i}^N(\bm{F}_{i,\text{core}}+\bm{F}^*_{i,\text{shell}}),
\end{equation}
which is a random
variable with zero mean and bounded variance. The final approximation for the short-range force is
\begin{equation}
\bm{F}_{\mathcal{N},i}^*=\bm{F}_{i,\text{core}}+\bm{F}^*_{i,\text{shell}}-\bm{F}^{\text{corr}}.
\end{equation}
This RBL scheme is essential for overcoming the memory-bound bottleneck of classical neighbor-list algorithms: it avoids scanning all neighbors in the shell region and instead evaluates forces for only a small subset, greatly reducing the computational and memory cost with minimal impact on accuracy.

The long-range force $\bm{F}_{\mathcal{F},i}$ is evaluated in Fourier space using the RBSOG idea~\cite{liang2023random}. Starting from its exact spectral representation (Eq.~\eqref{eq::Long}), the Fourier transform $\widehat{\mathcal{F}}_{b}^{s}(\bm{k})$ is still an SOG, which is summable and can be normalized as a discrete probability distribution. One can regard the sum as an expectation over the distribution
\begin{equation}\label{eq::Pk}
\mathscr{P}(\bm{k}):=\frac{k^2\widehat{\mathcal{F}}_b^{s}(\bm{k})}{S},
\end{equation}
where the sum is defined by $S:=\sum_{\bm{k}\neq0}k^2\widehat{\mathcal{F}}_b^{s}(\bm{k}).$  The distribution Eq.~\eqref{eq::Pk} is an SOG scaled by $k^2$, and can be efficiently sampled by the Metropolis-Hastings (MH) algorithm~\cite{metropolis1953equation}; see~\cite{liang2023random} for details. At each MD step, a fixed mini-batch $P_F$ is picked and sampled frequencies $\{\bm{k}_p,p=1,\dots,P_F\}$ are drawn from $\mathscr{P}(\bm{k})$. The long-range force is then approximated as an expectation over $\mathscr{P}(\bm{k})$, 
\begin{equation}\label{eq::long-rangeforce}
\bm{F}_{\mathcal{F},i}^*=-\frac{Sq_i}{P_FV}\sum_{p=1}^{P_F}\frac{\bm{k}_p\operatorname{Im}\left(e^{-i \boldsymbol{k_p} \cdot \boldsymbol{r}_i} \rho(\boldsymbol{k_p})\right)}{\vert\bm{k}_p\vert^2}.
\end{equation} 
This replaces FFT by a stochastic approximation, resulting in a cheaper version compared to lattice-based Ewald-type methods.

In practice, we use the stochastic estimators $\bm{F}_{\mathcal{N},i}^{*}$ and $\bm{F}_{\mathcal{F},i}^*$ for the short- and long-range forces $\bm{F}_{\mathcal{N},i}$ and $\bm{F}_{\mathcal{F},i}$, respectively.  This combined approach leverages the strengths of both batch methods: the RBL efficiently handles the complex, consolidated short-range interactions in real space, while the RBSOG efficiently handles the smooth, long-range interactions in Fourier space. Their synergistic integration within the SOG-split framework is the foundation of our unified $O(N)$ algorithm. A sketch map of the resulting IRBSOG method is shown in Fig.~\ref{fig::decomposition}.

\subsection{Convergence analysis of IRBSOG-based simulations}
We analyze the convergence of the IRBSOG-based MD simulations. Let $\bm{F}_i^{\mathrm{total}}$ be the exact total
nonbonded force on particle $i$ under the full BMH potential, and let
$\bm{F}_i^{*}=\bm{F}_{\mathcal{N},i}^{*}+\bm{F}_{\mathcal{F},i}^{*}$ be its stochastic approximation produced by IRBSOG.
We define the force fluctuation \begin{equation}
\bm{\chi}_i:= \bm{F}_i^{\text{total}}-\bm{F}_i^*.
\end{equation}
Since the random mini-batches used in real space (RBL) and Fourier space (RBSOG) are
independent, the covariance term vanishes and we have
\begin{equation}
    \text{var}[\bm{\chi}_i]=\text{var}[\bm{F}_{\mathcal{N},i}^{*}]+\text{var}[\bm{F}_{\mathcal{F},i}^{*}].
\end{equation}
The variance of the RBSOG estimator $\bm{F}_{\mathcal{F},i}^{*}$ has been analyzed in Ref.~\cite{liang2023random};
we restate the main bound in Lemma~\ref{lemma:var_long}. Theorem~\ref{thm::consistent} then combines the
Fourier-space bound with a novel real-space bound for RBL, showing that $\bm{F}_i^{*}$ is an unbiased estimator of
$\bm{F}_i^{\mathrm{total}}$ and that the variance decays as the batch sizes increase.

% Following the random-batch analysis framework in
% Refs.~\cite{jin2021random,liang2021random}, Theorem~\ref{thm::consistent} shows that $\bm{F}_i^{*}$ is an unbiased
% estimator of $\bm{F}_i^{\mathrm{total}}$ and that its variance is bounded, with scaling inversely proportional to the batch sizes. Before giving the detailed proof, we present the following Lemma~\ref{lemma:var_long}, which provides the variance bound for the long-range force estimator, and the proof can be found in Ref.~\cite{liang2023random}.

% In order to establish the variance bound for the full IRBSOG method, we first analyze its long-range component, which is computed via the RBSOG scheme. The following lemma provides the variance bound for this long-range force estimator, and the proof can be found in Ref.~\cite{liang2023random}.

\begin{lemma}\label{lemma:var_long}
Under the Debye-H\"uckel (DH) assumption, the variance of the long-range force approximation $\bm{F}_{\mathcal{F},i}^{*}$ satisfies
\begin{equation}
\operatorname{var}[\bm{F}_{\mathcal{F},i}^{*}]
\le
\frac{S C q_i^2}{P_F V}\sum_{m=0}^M w_m,
\end{equation}
where $C$ is the constant in the DH bound, $q_i$ is the charge of particle $i$, $V$ is the system volume, $P_F$ is
the Fourier-space batch size, and $w_m$ are the Gaussian weights used in Eq.~(7). Consequently,
$\operatorname{var}[\bm{F}_{\mathcal{F},i}^{*}] = O(1/P_F)$, and this bound is independent of the system size $N$
and the number of Gaussians $M$.
\end{lemma}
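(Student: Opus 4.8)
The plan is to regard the estimator $\bm{F}_{\mathcal{F},i}^{*}$ as an empirical mean of $P_F$ independent and identically distributed single-sample estimators, and then reduce the claim to a second-moment bound for one sample. Writing $\bm{X}_p := -\frac{Sq_i}{V}\,\bm{k}_p|\bm{k}_p|^{-2}\operatorname{Im}(e^{-i\bm{k}_p\cdot\bm{r}_i}\rho(\bm{k}_p))$ with each $\bm{k}_p$ drawn from $\mathscr{P}$ of Eq.~\eqref{eq::Pk}, Eq.~\eqref{eq::long-rangeforce} reads $\bm{F}_{\mathcal{F},i}^{*}=P_F^{-1}\sum_{p=1}^{P_F}\bm{X}_p$, so by independence $\operatorname{var}[\bm{F}_{\mathcal{F},i}^{*}]=P_F^{-1}\operatorname{var}[\bm{X}_1]\le P_F^{-1}\mathbb{E}[|\bm{X}_1|^2]$. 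The first task is thus to evaluate $\mathbb{E}[|\bm{X}_1|^2]$ exactly over the sampling law.

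The decisive simplification is that $\mathscr{P}$ is tailored to the estimator. Since $|\bm{X}_1|^2 = (S^2q_i^2/V^2)\,|\bm{k}|^{-2}(\operatorname{Im}(e^{-i\bm{k}\cdot\bm{r}_i}\rho(\bm{k})))^2$ while $\mathscr{P}(\bm{k})=|\bm{k}|^2\widehat{\mathcal{F}}_b^s(\bm{k})/S$, the factor $|\bm{k}|^{2}$ in the sampling density cancels the $|\bm{k}|^{-2}$ in the integrand, leaving
\begin{equation}
\mathbb{E}[|\bm{X}_1|^2]=\frac{Sq_i^2}{V^2}\sum_{\bm{k}\neq 0}\widehat{\mathcal{F}}_b^s(\bm{k})\left(\operatorname{Im}\left(e^{-i\bm{k}\cdot\bm{r}_i}\rho(\bm{k})\right)\right)^2 .
\end{equation}
The whole problem is therefore reduced to controlling the weighted sum $\sum_{\bm{k}\neq 0}\widehat{\mathcal{F}}_b^s(\bm{k})(\operatorname{Im}(\cdots))^2$, for which two ingredients are needed: a uniform bound on the per-mode charge fluctuation and an evaluation of the total Gaussian weight.

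For the second ingredient I would use the inverse transform $\mathcal{F}_b^s(r)=V^{-1}\sum_{\bm{k}}\widehat{\mathcal{F}}_b^s(\bm{k})e^{i\bm{k}\cdot\bm{r}}$ evaluated at $r=0$, which gives $\sum_{\bm{k}}\widehat{\mathcal{F}}_b^s(\bm{k})=V\mathcal{F}_b^s(0)=V\sum_{m=0}^M w_m$ directly from Eq.~\eqref{eq:far} (the omitted $\bm{k}=\bm{0}$ mode contributes only a single, size-independent term and is lower order). The first ingredient is where the Debye-H\"uckel closure enters: it supplies a uniform bound $\langle(\operatorname{Im}(e^{-i\bm{k}\cdot\bm{r}_i}\rho(\bm{k})))^2\rangle\le C$ with $C$ independent of $\bm{k}$ and $N$, extracted from the screened charge-charge correlation. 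Combining the two yields $\sum_{\bm{k}\neq 0}\widehat{\mathcal{F}}_b^s(\bm{k})(\operatorname{Im}(\cdots))^2\le CV\sum_{m}w_m$, hence $\mathbb{E}[|\bm{X}_1|^2]\le (SCq_i^2/V)\sum_m w_m$; dividing by $P_F$ gives the stated estimate. The $N$-independence then follows because $S\propto V$ makes $S/V$ intensive, while $\sum_m w_m=\mathcal{F}_b^s(0)$ and $C$ are size-independent.

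The main obstacle is precisely the Debye-H\"uckel step. Naively, the diagonal (self) contribution to $(\operatorname{Im}(e^{-i\bm{k}\cdot\bm{r}_i}\rho(\bm{k})))^2$ scales like $\sum_j q_j^2$ and is extensive, so a bound with an $N$-free constant can hold only once the screening correlations encoded in the DH closure are used to absorb this apparent growth into $C$; turning that into a rigorous rather than order-of-magnitude statement is the delicate part, and it is exactly the content imported from Ref.~\cite{liang2023random}. The remaining pieces---the i.i.d.\ reduction, the $|\bm{k}|^2$ cancellation against $\mathscr{P}$, and the closed-form evaluation of $\sum_{\bm{k}}\widehat{\mathcal{F}}_b^s(\bm{k})$ via the inverse transform---are routine.
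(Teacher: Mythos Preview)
The paper does not actually prove this lemma; it simply restates the bound and cites Ref.~\cite{liang2023random} for the argument, using only the conclusion $\operatorname{var}[\bm{F}_{\mathcal{F},i}^{*}]=O(1/P_F)$ inside the proof of Theorem~\ref{thm::consistent}. Your sketch---the i.i.d.\ reduction to a single-sample second moment, the cancellation of the $|\bm{k}|^{2}$ factor against the importance-sampling density $\mathscr{P}$, the evaluation of $\sum_{\bm{k}}\widehat{\mathcal{F}}_b^{s}(\bm{k})$ via the inverse transform at the origin, and the invocation of the Debye--H\"uckel bound on the structure-factor term---is precisely the route taken in that reference, and you correctly isolate the DH step as the only non-elementary ingredient. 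There is nothing to compare against in the present paper, and your outline is sound.
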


\begin{theorem}\label{thm::consistent}
Let the number density $\rho_r=N/V$ be fixed. The force fluctuation $\bm{\chi}_i$ satisfies
\begin{equation}
\mathbb{E}\bm{\chi}_i=\bm{0}.
\end{equation}
Moreover, under the DH assumption,
\begin{equation}
\operatorname{var}[\bm{\chi}_i]=O\!\left(\frac{1}{P_F}+\frac{1}{P_R}\right),
\end{equation}
where $P_F$ and $P_R$ are the batch sizes used in Fourier space and real space calculations, respectively.
\end{theorem}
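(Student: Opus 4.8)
The plan is to prove the two claims separately, since they concern different statistical moments of $\bm{\chi}_i$. For the unbiasedness $\E\bm{\chi}_i=\bm{0}$, I would first recall that $\bm{F}_i^{\mathrm{total}}=\bm{F}_{\mathcal{N},i}+\bm{F}_{\mathcal{F},i}$ is the exact decomposition guaranteed by the SOG split in Section~\ref{sec:II.A}, and then show that each stochastic estimator is individually unbiased. For the Fourier part, I would verify that the importance-sampling estimator $\bm{F}_{\mathcal{F},i}^{*}$ in Eq.~\eqref{eq::long-rangeforce} has expectation equal to $\bm{F}_{\mathcal{F},i}$: drawing $\bm{k}_p$ from $\mathscr{P}(\bm{k})=k^2\widehat{\mathcal{F}}_b^{s}(\bm{k})/S$ means that each summand, reweighted by $S/(P_F V)\cdot 1/|\bm{k}_p|^2$, reproduces the exact spectral sum of Eq.~\eqref{eq::Long} in expectation, since the $k^2$ factor in the sampling density exactly cancels the $1/|\bm{k}_p|^2$ in the estimator. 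For the real-space part, I would establish that $\E[\bm{F}_{\mathcal{N},i}^{*}]=\bm{F}_{\mathcal{N},i}$ by showing the shell estimator is unbiased: since $B(i)$ is a uniform random mini-batch of $P_R$ neighbors drawn from the $N_{\text{shell}}$ shell neighbors, $\E[\bm{F}_{i,\text{shell}}^*]=\frac{N_{\text{shell}}}{P_R}\cdot\frac{P_R}{N_{\text{shell}}}\sum_{r_{\text{core}}<r_{ij}<r_{\text{shell}}}\bm{f}_{ij}=\bm{F}_{i,\text{shell}}$, so the core plus shell contributions recover the exact short-range force in expectation. The momentum-correction term $\bm{F}^{\text{corr}}$ has zero mean (it is an average of zero-mean fluctuations once the exact parts are subtracted, as noted in the text), so it does not bias the estimator.

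For the variance bound, the key structural input is the independence of the real- and Fourier-space mini-batches, which gives $\text{var}[\bm{\chi}_i]=\text{var}[\bm{F}_{\mathcal{N},i}^{*}]+\text{var}[\bm{F}_{\mathcal{F},i}^{*}]$ as already stated in the excerpt. The Fourier-space term is handled directly by Lemma~\ref{lemma:var_long}, which I would invoke to assert $\text{var}[\bm{F}_{\mathcal{F},i}^{*}]=O(1/P_F)$ under the DH assumption, uniformly in $N$ and $M$. The remaining work is the real-space bound $\text{var}[\bm{F}_{\mathcal{N},i}^{*}]=O(1/P_R)$. The plan here is to treat $\bm{F}_{i,\text{shell}}^*$ as a scaled sample-mean estimator over the finite population of $N_{\text{shell}}$ shell forces: for sampling $P_R$ items, the variance of the Horvitz-Thompson--type estimator scales as $\frac{N_{\text{shell}}^2}{P_R}\,\text{Var}_{\text{pop}}(\bm{f}_{ij})$, where the per-pair population variance is controlled by the maximum shell-force magnitude. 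The essential physical bound is that each shell force $|\bm{f}_{ij}|$ is bounded (the Coulomb residual $\mathcal{N}_b^s$ and the BMH terms $\mathcal{B}_{\text{BMH}}$ are all smooth and bounded for $r_{\text{core}}<r_{ij}<r_{\text{shell}}$), and that $N_{\text{shell}}$ is controlled by the fixed number density $\rho_r=N/V$ and the shell volume, hence is an $O(1)$ quantity independent of $N$. Combining these yields $\text{var}[\bm{F}_{i,\text{shell}}^*]=O(1/P_R)$.

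The main obstacle, and the genuinely novel part relative to Ref.~\cite{liang2023random}, is the real-space variance analysis, because the BMH short-range force $\bm{f}_{ij}$ is a sum of heterogeneous contributions (the Gaussian-truncated Coulomb residual, an exponential repulsion, and two algebraically decaying dispersion terms) rather than a single Lennard-Jones-type term. I would therefore need to bound each contribution separately over the shell annulus $r_{\text{core}}<r_{ij}<r_{\text{shell}}$: the exponential term decays monotonically and is uniformly bounded on the annulus, while the dispersion terms $6C_{ij}/r^8$ and $8D_{ij}/r^{10}$ attain their maxima at $r_{\text{core}}$, giving explicit constants depending on $r_{\text{core}}$, $C_{ij}$, $D_{ij}$. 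The Coulomb residual contributes a bounded smooth kernel since the subtraction of the truncated Gaussian series in Eq.~\eqref{eq::Fshort} removes the singularity. Assembling these bounds gives a uniform constant $C_{\text{short}}$ with $|\bm{f}_{ij}|\le C_{\text{short}}$, so that $\text{var}[\bm{F}_{i,\text{shell}}^*]\le \frac{N_{\text{shell}}^2 C_{\text{short}}^2}{P_R}$. A secondary technical point is justifying that the momentum-correction term $\bm{F}^{\text{corr}}=O(1/N)$ contributes negligibly to the per-particle variance at fixed density, so it does not spoil the $O(1/P_R)$ rate. Adding the two independent contributions then yields the claimed $\text{var}[\bm{\chi}_i]=O(1/P_F+1/P_R)$, completing the proof.
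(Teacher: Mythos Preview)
Your proposal is correct and reaches the same conclusion, but the real-space variance analysis takes a simpler route than the paper's. The paper splits $\text{var}[\bm{F}_{\mathcal{N},i}^{*}]$ into Coulomb-residual, repulsion, and dispersion pieces and bounds each by an integral of the form $\frac{N_{\text{shell}}-P_R}{P_R}\int_{r_{\text{core}}}^{r_{\text{shell}}}4\pi\rho_r r^2\,|f(r)|^2\,dr$. For the Coulomb residual it does not simply bound the force pointwise: it rewrites the integrand as the relative error of the BSA for the $1/r^3$ kernel, invokes the uniform BSA bound analogous to Eq.~\eqref{eq::asymperror}, and controls the positive and negative truncation tails separately (yielding explicit constants in $b$, $s$, $M$, and showing $b^{-6M}$ decay in the truncation parameter); see Eqs.~\eqref{varcoulomb0}--\eqref{varcoulomb}. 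The repulsion and dispersion pieces are handled by direct integration, giving $e^{-2r_{\text{core}}/\rho}$ and $r_{\text{core}}^{-11}$ prefactors (Eqs.~\eqref{varrep}--\eqref{vardisp}). Your uniform pointwise bound $|\bm{f}_{ij}|\le C_{\text{short}}$ on the annulus, combined with $N_{\text{shell}}=O(1)$ at fixed $\rho_r$, is enough for the asymptotic $O(1/P_R)$ claim and is more elementary; the paper's argument is longer but buys explicit, parameter-resolved constants that display how the variance responds to $r_{\text{core}}$, $b$, $s$, and $M$.

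One correction to your justification: the short-range Coulomb force in Eq.~\eqref{eq::Fshort}, namely $1/r^{2}-r\sum_m \frac{w_m}{b^{2m}s^2}e^{-r^2/(2b^{2m}s^2)}$, still carries the $1/r^2$ singularity at the origin---the subtraction of finitely many Gaussians does \emph{not} remove it. The residual is bounded on the shell only because $r\ge r_{\text{core}}>0$, not because of any cancellation. This does not affect your conclusion, but the stated reason should be amended.
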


\begin{proof}
% The variance of the force approximation can be decomposed as:
% \begin{equation}
%     \text{var}[\bm{\chi}_i]=\text{var}[\bm{F}_{\mathcal{N},i}^{*}]+\text{var}[\bm{F}_{\mathcal{F},i}^{*}].
% \end{equation}
% The independence of the random mini-batches used in real space (RBL) and Fourier space (RBSOG) ensures that no covariance term appears in this decomposition. 

The proof can be obtained by extending the random-batch analysis in Refs.~\cite{jin2021random,liang2021random}. We first note that
$\mathbb{E}\bm{\chi}_i=\bm{0}$ holds because both $\bm{F}_{\mathcal{N},i}^{*}$ and $\bm{F}_{\mathcal{F},i}^{*}$ are
constructed as unbiased estimators of their corresponding exact contributions. For the variance, Lemma~\ref{lemma:var_long} gives
\begin{equation}
\operatorname{var}[\bm{F}_{\mathcal{F},i}^{*}] = O(1/P_F).
\end{equation}
It remains to bound $\operatorname{var}[\bm{F}_{\mathcal{N},i}^{*}]$ for the RBL estimator.

We split the
short-range variance into the SOG residual, the exponential repulsion, and the dispersion contributions:
\begin{equation}\label{eq::variance}
    \text{var}[\bm{F}_{\mathcal{N},i}^{*}]\le\text{var}[\bm{F}_{\text{Coulomb},i}^{*}]+\text{var}[\bm{F}_{\text{rep},i}^{*}]+\text{var}[\bm{F}_{\text{disp},i}^{*}].
\end{equation}
We first consider the variance of the short‑range SOG residual force. Starting from the expression for the approximate force, we obtain
\begin{equation}~\label{varcoulomb0}
\begin{aligned}
        \text{var}[\bm{F}_{\text{Coulomb},i}^{*}]&\lesssim\dfrac{(N_{\text{shell}}-P_R)}{P_R}\int_{r_{\text{core}}}^{r_{\text{shell}}}4\pi\rho_r r^2\left|\dfrac{1}{r^2}-\sum_{m=0}^M\frac{w_mr}{b^{2m}s^2}e^{-\frac{r^2}{2b^{2m}s^2}}\right|^2dr\\
    &=\dfrac{(N_{\text{shell}}-P_R)}{P_R}\int_{r_{\text{core}}}^{r_{\text{shell}}}\dfrac{4\pi\rho_r}{r^2}\left|1-\dfrac{\sqrt{2}r^3\ln b}{\sqrt{\pi}s^3}\sum_{m=0}^M\dfrac{1}{b^{3m}}e^{-\frac{r^2}{2b^{2m}s^2}}\right|^2dr.
    \end{aligned}
\end{equation}
For the truncated series, we decompose the error into the bilateral-series error and the positive/negative tails:
\begin{equation}
\begin{aligned}
        \left|1-\dfrac{\sqrt{2}r^3\ln b}{\sqrt{\pi}s^3}\sum_{m=0}^{M}\dfrac{1}{b^{3m}}e^{-\frac{r^2}{2b^{2m}s^2}}\right| &\le \left|1-\dfrac{\sqrt{2}r^3\ln b}{\sqrt{\pi}s^3}\sum_{m=-\infty}^{\infty}\dfrac{1}{b^{3m}}e^{-\frac{r^2}{2b^{2m}s^2}}\right|+ \left|\dfrac{\sqrt{2}r^3\ln b}{\sqrt{\pi}s^3}\sum_{m=M+1}^{\infty}\dfrac{1}{b^{3m}}e^{-\frac{r^2}{2b^{2m}s^2}}\right| \\
        &\quad + \left|\dfrac{\sqrt{2}r^3\ln b}{\sqrt{\pi}s^3}\sum_{m=-\infty}^{-1}\dfrac{1}{b^{3m}}e^{-\frac{r^2}{2b^{2m}s^2}}\right|.
\end{aligned}
\end{equation}
The first term is the relative error of BSA for the $1/r^3$ kernel, which has a uniform error bound as
$b\to 1$ similar to Eq.~\eqref{eq::asymperror}~\cite{predescu2020u}:
\begin{equation}\label{bsa}
    \left|1-\dfrac{\sqrt{2}r^3\ln b}{\sqrt{\pi}s^3}\sum_{m=-\infty}^{\infty}\dfrac{1}{b^{3m}}e^{-\frac{r^2}{2b^{2m}s^2}}\right|\lesssim 4\sqrt{2}\left(\dfrac{9}{4}+\dfrac{\pi^2}{(\ln b)^2}\right)^{1/2}e^{-\frac{\pi^2}{2\ln b}}.
\end{equation}
The last two terms represent the truncation errors from the positive and negative tails of the BSA, respectively, and can be bounded as follows. For the positive tail, using $e^{-\frac{r^2}{2b^{2m}s^2}}\le1$ gives
\begin{equation}~\label{positive}
\left|\dfrac{\sqrt{2}r^3\ln b}{\sqrt{\pi}s^3}\sum_{m=M+1}^{\infty}\dfrac{1}{b^{3m}}e^{-\frac{r^2}{2b^{2m}s^2}}\right| \le \dfrac{\sqrt{2}r^3\ln b}{\sqrt{\pi}s^3} \cdot \frac{b^{-3M}}{b^3-1}.
\end{equation}
For the negative tail, letting $j=-m\ge1$, we have
\begin{equation}
\left|\dfrac{\sqrt{2}r^3\ln b}{\sqrt{\pi}s^3}\sum_{m=-\infty}^{-1}\dfrac{1}{b^{3m}}e^{-\frac{r^2}{2b^{2m}s^2}}\right| = \dfrac{\sqrt{2}r^3\ln b}{\sqrt{\pi}s^3}\sum_{j=1}^{\infty} b^{3j} e^{-\frac{r^2 b^{2j}}{2 s^2}}.
\end{equation}
This series decays doubly exponentially in $j$. A convenient bound is obtained by isolating the first term and
bounding the remainder via
\begin{equation}
\left|\dfrac{\sqrt{2}r^3\ln b}{\sqrt{\pi}s^3}\sum_{m=-\infty}^{-1}\dfrac{1}{b^{3m}}e^{-\frac{r^2}{2b^{2m}s^2}}\right|  \le \dfrac{\sqrt{2}r^3\ln b}{\sqrt{\pi}s^3} \left( b^3 e^{-\frac{r^2 b^2}{2 s^2}} + \frac{1}{2\ln b} \left(\frac{2s^2}{r^2}\right)^{3/2} \Gamma\!\left(\frac{3}{2}, \frac{r^2 b^2}{2 s^2}\right) \right),
\end{equation}
where $\Gamma(s,x)=\int_{x}^{\infty}t^{s-1}e^{-t}dt$ is the upper incomplete Gamma function. For typical values of $r$ ($r\ge r_{\text{core}}>0$) and $b>1$, the term $r^2b^2/2s^2$ is large and the incomplete Gamma function admits the asymptotic approximation $\Gamma(3/2,r^2b^2/2s^2)\approx\sqrt{r^2b^2/2s^2}e^{-r^2b^2/2s^2}$. Consequently, the negative tail bound decays exponentially with $r^2$ and is negligible in the integration over $[r_{\text{core}},r_{\text{shell}}]$.
Substituting the bounds in Eq.~\eqref{bsa} and \eqref{positive} into the variance integral yields
\begin{equation}~\label{varcoulomb}
\begin{aligned}
    \text{var}[\bm{F}_{\text{Coulomb},i}^{*}] &\lesssim\dfrac{(N_{\text{shell}}-P_R)}{P_R}\int_{r_{\text{core}}}^{r_{\text{shell}}}\dfrac{4\pi\rho_r}{r^2}\left|1-\dfrac{\sqrt{2}r^3\ln b}{\sqrt{\pi}s^3}\sum_{m=0}^M\dfrac{1}{b^{3m}}e^{-\frac{r^2}{2b^{2m}s^2}}\right|^2dr\\
    &\lesssim\dfrac{(N_{\text{shell}}-P_R)}{P_R}\int_{r_{\text{core}}}^{r_{\text{shell}}}\dfrac{4\pi\rho_r}{r^2}\cdot2\left[\left(72+\dfrac{32\pi^2}{(\ln b)^2}\right)e^{-\frac{\pi^2}{\ln b}}+\frac{2(\ln b)^2}{\pi s^6(b^3-1)^2}b^{-6M}r^6\right]dr
    \\  
    &=\dfrac{(N_{\text{shell}}-P_R)}{P_R}8\pi\rho_r\Bigg[\left(72+\dfrac{32\pi^2}{(\ln b)^2}\right)e^{-\frac{\pi^2}{\ln b}}\int_{r_{\text{core}}}^{r_{\text{shell}}}r^{-2}dr+\frac{2(\ln b)^2}{\pi s^6(b^3-1)^2}b^{-6M}\int_{r_{\text{core}}}^{r_{\text{shell}}}r^4dr\Bigg]\\
     &=\dfrac{(N_{\text{shell}}-P_R)}{P_R}8\pi\rho_r\Bigg[\left(72+\dfrac{32\pi^2}{(\ln b)^2}\right)e^{-\frac{\pi^2}{\ln b}}\left({\dfrac{1}{r_{\text{core}}}}-\dfrac{1}{r_{\text{shell}}}\right)+\frac{2(\ln b)^2}{5\pi s^6(b^3-1)^2}\left(r_{\text{shell}}^5-r_{\text{core}}^5\right)b^{-6M}\Bigg].
    \end{aligned}
\end{equation}
Then considering the variance of the exponential repulsion force, one can obtain:
\begin{equation}\label{varrep}
        \text{var}[\bm{F}_{\text{rep},i}^{*}]\lesssim\dfrac{(N_{\text{shell}}-P_R)}{P_R}\int_{r_{\text{core}}}^{r_{\text{shell}}}4\pi \rho_rr^2 \left(\dfrac{A_{ij}}{\rho}\right)^2e^{2\sigma_{ij}/\rho}e^{-2r/\rho}dr 
        \lesssim\dfrac{C_1(N_{\text{shell}}-P_R)}{P_R}\rho_{r}e^{-2r_{\text{core}}/{\rho}},
\end{equation}
where $C_1$ is a constant depending only on the BMH parameters. Finally, the variance of the dispersion force can be bounded by the following:
\begin{equation}\label{vardisp}
\begin{aligned}
    \text{var}[\bm{F}_{\text{disp},i}^{*}]\lesssim\dfrac{(N_{\text{shell}}-P_R)}{P_R}\int_{r_{\text{core}}}^{r_{\text{shell}}}4\pi \rho_rr^2 \left(\dfrac{6C_{ij}}{r^7}+\dfrac{8D_{ij}}{r^{9}}\right)^2dr 
        \lesssim\dfrac{C_2(N_{\text{shell}}-P_R)}{P_R}\rho_{r}r_{\text{core}}^{-11},
    \end{aligned}
\end{equation}
where $C_2$ is another constant depending only on the BMH parameters.

Since the constants $C$, $C_1$, and $C_2$ are all independent of $N$ and $M$, combining the results of the short-range force approximation in Eqs.~\eqref{varcoulomb}, \eqref{varrep} and \eqref{vardisp}, we can conclude that the variance of $\bm{F}_{\mathcal{N},i}$ scales as $O(1/P_R)$ and we complete the proof.
\end{proof}

We now analyze the convergence of IRBSOG-based MD. Let $\Delta t$ be the step size of MD time integration. Under the canonical (NVT) ensemble, a thermostat is generally used to sample correct ensemble distribution, such as the Langevin thermostat~\cite{frenkel2023understanding} and the Nos\'e-Hoover (NH) thermostat~\cite{hoover1985canonical}. We consider the Langevin thermostat for convenience in analysis. Let $(\bm{r}_i,\bm{v}_i)$ be the solution of the underdamped Langevin equations of motion,
\begin{equation}
    \begin{aligned}
        &d\bm{r}_i=\bm{v}_i\ dt,\\
        &m\ d\bm{v}_i=(\bm{F}^{\text{total}}_i-\gamma\bm{v}_i)\ dt+\sqrt{2\gamma k_BT}\ d\bm{W}_i,
    \end{aligned}
\end{equation}
where $k_B$ is the Boltzmann constant, $\gamma$ is the friction coefficient, and ${\bm{W}_i}$ are independently identically distributed Wiener processes. Let $(\bm{r_i^*, \bm{v}_i^*})$ be the solutions with the force $\bm{F}^{\text{total}}_i$ approximated by $\bm{F}_i^*$, the force calculated by IRBSOG, with the same initial data. The following Theorem indicates that the IRBSOG accurately captures finite-time dynamics despite the random batch approximation of the force under the Langevin thermostat. The proof of this theorem can be obtained by following previous work~\cite{jin2020random,jin2021convergence}, and we will not present it here.
\begin{theorem}\label{thm::1}
	 Suppose that the forces $\bm{F}^{\text{total}}_i$ are bounded and Lipschitz and $\mathbb{E}\chi_i=0$. Under the synchronization coupling assumption that the same initial values as well as the same Wiener process $\bm{W}_i$ are used, then for any finite $T>0$, there exists $C(T)>0$ such that
	 \begin{equation}\label{eq::theorem}
     \sup_{t\in[0,T]}\sqrt{\dfrac{1}{N}\sum_{i=1}^N\mathbb{E}\left(|\bm{r_i-\bm{r}^*_i}|^2+|\bm{v_i-\bm{v}^*_i}|^2\right)}\le C(T)\sqrt{\Lambda\Delta t},
	 \end{equation}  
where $\Lambda=\max_i(\mathbb{E}|\chi_i|^2)$ is an upper bound for the variance of the random approximation.
\end{theorem}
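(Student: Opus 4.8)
The plan is to follow the synchronization coupling argument developed for random batch methods in Refs.~\cite{jin2020random,jin2021convergence}, specialized to the underdamped Langevin system above. First I would couple the exact trajectory $(\bm{r}_i,\bm{v}_i)$ and the IRBSOG trajectory $(\bm{r}_i^*,\bm{v}_i^*)$ by driving both with the same Wiener processes $\bm{W}_i$ and the same initial data, and introduce the error variables $\bm{e}_i^r=\bm{r}_i-\bm{r}_i^*$ and $\bm{e}_i^v=\bm{v}_i-\bm{v}_i^*$. Because the thermostat noise is identical in the two systems, the stochastic forcing cancels in the error dynamics, leaving the pathwise ODE
\[
\frac{d\bm{e}_i^r}{dt}=\bm{e}_i^v,\qquad
m\,\frac{d\bm{e}_i^v}{dt}=\bigl(\bm{F}_i^{\text{total}}(\bm{r})-\bm{F}_i^*\bigr)-\gamma\,\bm{e}_i^v,
\]
so that the entire error is generated by the force discrepancy, with the friction contributing a favorable dissipative term $-\gamma|\bm{e}_i^v|^2$.

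The key step is to split the discrepancy into three pieces by inserting the exact force evaluated at the frozen, start-of-step configuration $\bm{r}^*(t_k)$, where $t_k=k\Delta t$ is the left endpoint of the current step and $\bm{F}_i^*$ is held constant on $[t_k,t_{k+1})$:
\[
\bm{F}_i^{\text{total}}(\bm{r}(t))-\bm{F}_i^*
=\underbrace{\bigl[\bm{F}_i^{\text{total}}(\bm{r}(t))-\bm{F}_i^{\text{total}}(\bm{r}^*(t))\bigr]}_{\text{Lipschitz}}
+\underbrace{\bigl[\bm{F}_i^{\text{total}}(\bm{r}^*(t))-\bm{F}_i^{\text{total}}(\bm{r}^*(t_k))\bigr]}_{\text{time-freezing}}
+\underbrace{\bm{\chi}_i(t_k)}_{\text{batch}}.
\]
The first piece is controlled by $\tfrac1N\sum_i|\bm{F}_i^{\text{total}}(\bm{r})-\bm{F}_i^{\text{total}}(\bm{r}^*)|^2\lesssim \tfrac1N\sum_i|\bm{e}_i^r|^2$ via the Lipschitz hypothesis; the second is $O(\Delta t)$ in $L^2$, since the boundedness of the forces keeps the velocity moments finite on $[0,T]$ and hence $|\bm{r}^*(t)-\bm{r}^*(t_k)|\le\Delta t\sup_{s}|\bm{v}^*(s)|$; and the third is exactly the random-batch fluctuation of Theorem~\ref{thm::consistent}, with conditional mean zero and second moment bounded by $\Lambda$.

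I would then apply the chain rule to $|\bm{e}_i^r|^2+|\bm{e}_i^v|^2$, take expectations, and average over $i$. The Lipschitz and time-freezing pieces yield a self-reinforcing term $C_1\,\tfrac1N\sum_i\mathbb{E}[|\bm{e}_i^r|^2+|\bm{e}_i^v|^2]$ together with an $O(\Delta t^2)$ inhomogeneity after Young's inequality. The crucial estimate is the cross term $\mathbb{E}\langle\bm{e}_i^v(t),\bm{\chi}_i(t_k)\rangle$: writing $m\bm{e}_i^v(t)=m\bm{e}_i^v(t_k)+\int_{t_k}^t[\,\cdots\,]\,ds$ and conditioning on the filtration $\mathcal{F}_{t_k}$ generated by the history up to $t_k$ (before the batch at step $k$ is drawn), the $\mathcal{F}_{t_k}$-measurable part $\bm{e}_i^v(t_k)$ pairs with $\mathbb{E}[\bm{\chi}_i(t_k)\mid\mathcal{F}_{t_k}]=\bm{0}$ and vanishes, while the accumulated integral—which itself contains $(t-t_k)\bm{\chi}_i(t_k)/m$—contributes only $O(\Delta t\,\Lambda)$. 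This is precisely where the unbiasedness $\mathbb{E}\bm{\chi}_i=\bm{0}$ established in Theorem~\ref{thm::consistent} is indispensable.

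Collecting the bounds gives a differential inequality of the form
\[
\frac{d}{dt}\,\frac{1}{N}\sum_{i=1}^N\mathbb{E}\bigl[|\bm{e}_i^r|^2+|\bm{e}_i^v|^2\bigr]
\le C_1\,\frac{1}{N}\sum_{i=1}^N\mathbb{E}\bigl[|\bm{e}_i^r|^2+|\bm{e}_i^v|^2\bigr]+C_2\,\Lambda\,\Delta t,
\]
where the $O(\Delta t^2)$ term is absorbed as lower order for fixed batch sizes. Grönwall's inequality on $[0,T]$ then yields $\tfrac1N\sum_i\mathbb{E}[|\bm{e}_i^r|^2+|\bm{e}_i^v|^2]\le C(T)^2\,\Lambda\,\Delta t$, and taking the square root gives Eq.~\eqref{eq::theorem}. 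The main obstacle is the martingale-type control of the fluctuation cross term: one must organize the filtration so that the frozen batch at each step is conditionally independent of the error accumulated up to that step, so that the zero-mean property annihilates the leading contribution and only the variance survives at order $\Delta t\,\Lambda$. Without this structure the naive Cauchy--Schwarz bound would lose the essential $\Delta t$ factor and degrade the rate.
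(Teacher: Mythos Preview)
Your proposal is correct and follows exactly the approach the paper intends: the paper does not give its own proof but simply states that it ``can be obtained by following previous work~\cite{jin2020random,jin2021convergence}'', and your synchronization-coupling argument with the three-way force splitting, the conditional-mean-zero treatment of the batch cross term, and the Gr\"onwall closure is precisely that route.
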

By Theorem~\ref{thm::consistent}, we have $\Lambda=O(1/P_R+1/P_F)$. For long-time simulations, a uniform-in-time error estimate can be derived in a manner similar to~\cite{jin2022random} for systems with regular force fields. We note that Theorem~\ref{thm::1} assumes the force to be bounded and Lipschitz continuous, whereas the Coulomb kernel has a singularity at the origin. Nevertheless, we expect that the error bound in Eq.~\eqref{eq::theorem} remains valid in practice. This is because the BMH potential includes both Lennard-Jones-type interactions and a strong short-range exponential repulsion, which prevent particles from approaching arbitrarily close, so that the singularity does not occur during the simulation. A fully rigorous justification of this argument remains difficult and highly nontrivial. For simulations in the microcanonical (NVE) ensemble, random-batch-type methods can be combined with a weakly coupled thermal bath to improve long-time energy stability~\cite{liang2024energy}. For further discussion of other thermostats and ensembles, we refer the reader to~\cite{jin2021random,liang2023random}.

\subsection{Implementation details and complexity analysis}\label{sec:II.D}
We have implemented the IRBSOG method in a high-performance MD code that uses distributed memory parallelism and vectorization. The code is built on the Message Passing Interface (MPI) and AVX-512 single-instruction multiple-data (SIMD) instructions. For short-range forces in the core region, we set an additional inner cutoff radius $r_{\text{in}}<r_{\text{core}}$. For $0<r<r_{\text{in}}$, forces are computed using Taylor expansions; for $r_{\text{in}}<r<r_{\text{core}}$, we use a bitmask-based table-lookup technique~\cite{wolff1999tabulated} with linear interpolation between tabulated points. For the long-range approximation in Eq.~\eqref{eq::long-rangeforce}, the sampled Fourier modes and particle positions are vectorized when evaluating the structure factors $\rho(\bm{k})$ from local atoms on each MPI rank. A single global all-reduce then assembles the complete structure factors. The approximate force $\bm{F}_{\mathcal{F},i}^{*}$ for each particle is then computed from these structure factors. For better demonstration of the performance of IRBSOG, other LAMMPS components, including domain decomposition, ghost-atom communication, and neighbor lists construction with the multiple-page data structure, remain unchanged.

We analyze the per-step computational complexity and memory use of IRBSOG. Without loss of generality, assume a uniform particle density $\rho_{r}=N/V$. For the long-range part, with a fixed Fourier mini-batch size $P_{F} = O(1)$, the computational cost is $O(P_FN)$ and the memory storage is $O(P_{F})$ (the number of Fourier modes involved in the calculation). By comparison, FFT-based methods require $O(N\log N)$ work and $O(N)$ memory usage. For the short-range part, the per-particle neighbor list size (and both complexity and memory usage) required in the classical cutoff approach scales as $O(4\pi r_{\text{c}}^3\rho_r/3)$. IRBSOG reduces this to $O(4\pi r_{\text{core}}^3\rho_r/3+P_R)$, where $P_R$ is the shell mini-batch size. If one safely adopts $r_c=r_{\text{shell}}$, $r_{\text{shell}}/r_{\text{core}} = 2$ and an appropriate $P_R$, both the memory usage and the runtime for the short-range part are reduced by about one order of magnitude. Since short-range calculations are often memory-bound, IRBSOG is expected to be more efficient than the original RBSOG, in which only the long-range part is accelerated. The above analysis indicates that IRBSOG achieves an overall linear computational complexity.

\section{Numerical Examples}\label{sec:III}
In this section, we perform all-atom simulations with the IRBSOG-based MD under the NVT ensemble to validate both the accuracy and efficiency of the proposed method, with two benchmark systems including the molten NaCl and the molten alkali binary halide systems. All the simulations were conducted by our method implemented in the LAMMPS~\cite{thompson2022lammps} (version 21Nov2023), and were performed on the Siyuan Mark-I cluster at Shanghai Jiao Tong University, which comprises 936 nodes with 2 Intel Xeon ICX Platinum 8358 CPU (26 GHz, 32 cores) and 512 GB memory per node.

\subsection{Accuracy comparison in molten NaCl systems}\label{section:A}

We compare IRBSOG and PPPM for molten NaCl systems governed by the BMH potential. Simulations use a cubic box with 8192 ions (4096 Na$^{+}$ and 4096 Cl$^{-}$). After a $50~ps$ equilibration in the isothermal-isobaric (NPT) ensemble at $0$ $GPa$ and $1100$ $K$, we perform $1~ns$ production run in the NVT ensemble with a Nos\'{e}-Hoover thermostat~\cite{hoover1985canonical}, saving data every $100$ steps. The time step is $\Delta t=1~fs$. For PPPM, we set the Ewald splitting parameter $\alpha =0.09$, yielding a predicted relative error of $10^{-5}$. Short-range interactions are truncated at $r_c=10~\mathring{A}$. The BMH parameters provided by Tosi and Fumi~\cite{fumi1964ionic,tosi1964ionic} are used throughout (summarized in Table~\ref{tab1}).

\begin{table}[H]
	\caption{BMH potential parameters for molten NaCl.}
		\setlength{\tabcolsep}{8pt}
	\renewcommand{\arraystretch}{1.3}
	\centering
	\begin{tabular}{cccc}
		\hline
		\  
		&Na-Na &Na-Cl &Cl-Cl \\ \hline
		A$/$eV    & 0.303   & 0.242  &  0.182 \\ \hline
		$\sigma$$/$$\mathring{A}$    &2.340 &2.755 &3.170 \\ \hline
		C$/$eV$\cdot\mathring{A}^6$ 	   &1.049  &6.991  &72.40 \\  \hline
		D$/$eV$\cdot\mathring{A}^8$  	 &0.499  &8.676 &  145.427\\  \hline
		$\rho$$/$$\mathring{A}$ 	    &0.317  &0.317  &0.317 \\  \hline
	\end{tabular}
	\label{tab1}
\end{table}

We evaluate accuracy using the mean-square displacement (MSD) and radial distribution function (RDF). For IRBSOG, we set $r_{\text{core}}=5~\mathring{A}$ and $r_{\text{shell}}=10~\mathring{A}$, giving about $120$ particles in the shell region per ion. Figs.~\ref{rdf1} (a-b) show the MSD of Na and the Na-Na RDF, respectively. We test four combinations of batch sizes for IRBSOG: $(P_R,P_F)=(20,50)$, $(30,50)$, $(20,100)$, and $(30,100)$. As $P_R$ and $P_F$ increase, the MSD and RDF generated by the IRBSOG converge to those of PPPM; with $P_R=30$ and $P_F=100$, results of IRBSOG and PPPM are statistically indistinguishable. The agreement holds across multiple time scales, showing that IRBSOG reproduces the dynamics of system well.

\begin{figure*}[!ht]
	\centering	
         \includegraphics[width=0.8\linewidth]{./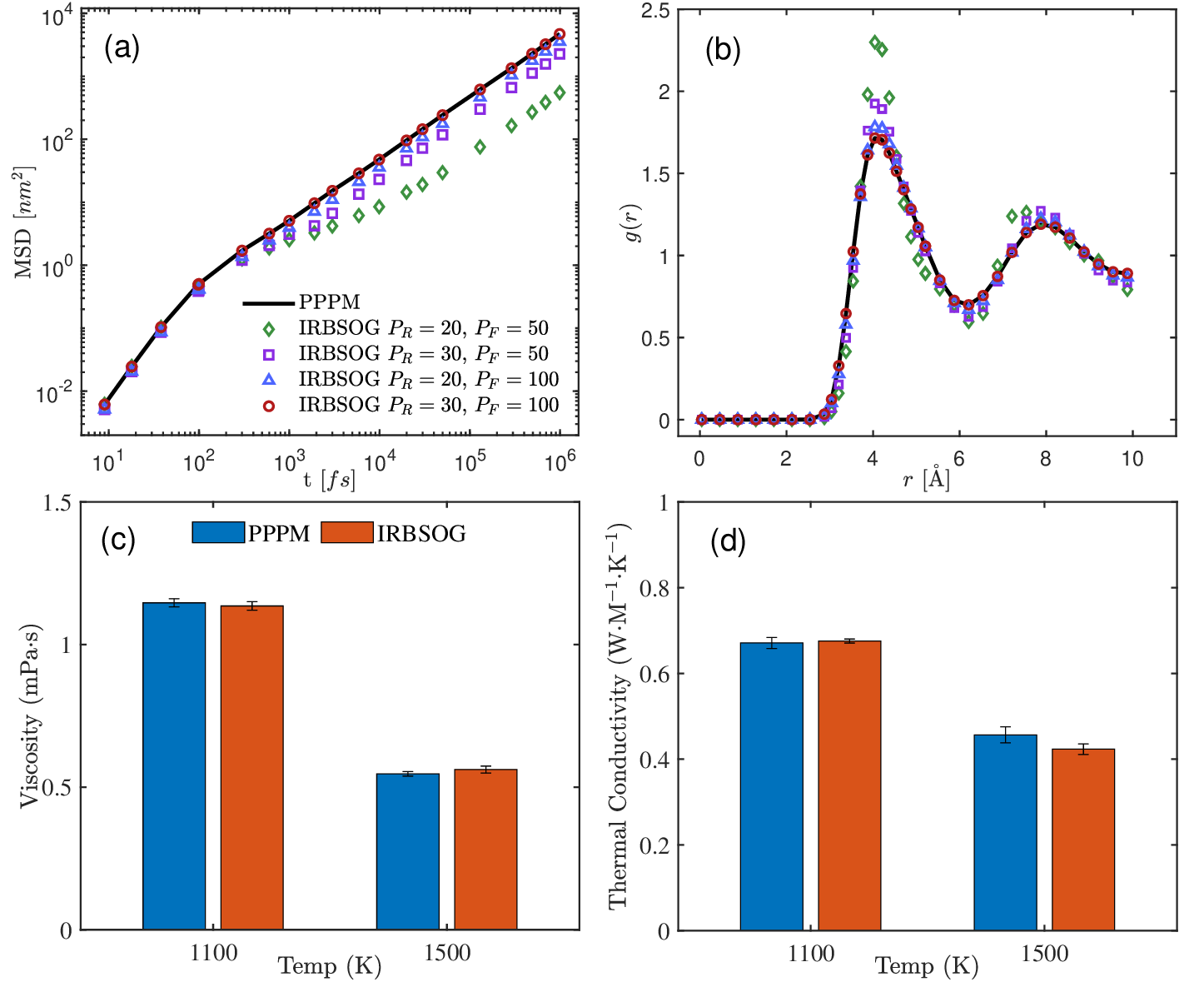}
                    \captionsetup{justification=raggedright, singlelinecheck=false}
	\caption{(a) Na MSD and (b) Na-Na RDF of molten NaCl systems governed by the BMH potential. IRBSOG results with shell-region batch size $P_R\in\{20, 30\}$ and Fourier-space batch size $P_F\in\{50, 100\}$ are compared with PPPM. Panels (c) and (d) compare viscosity and thermal conductivity of NaCl at $1100~K$ and $1500~K$: IRBSOG with $P_R=30$, $P_F=100$ (orange) versus PPPM (blue).}
	\label{rdf1}
\end{figure*}

We further assess accuracy in fluid mechanics and thermodynamics by computing the viscosity and thermal conductivity. Viscosity is the internal resistance to flow and is obtained from the Green-Kubo relation in equilibrium~\cite{dunweg2008colloidal}:
\begin{equation}
    \eta=\lim_{\tau\to\infty}\frac{1}{k_BTV}\int_0^{\tau}\langle S_{xy}(t)\cdot S_{xy}(0)\rangle dt,
\end{equation}
where $V$ is the system volume and $T$ the temperature. The $xy$-component of the stress tensor, $S_{xy}(t)$, is calculated by
\begin{equation}
S_{xy}=\sum^{N}_{i=1}\left[m_iv_{xi}v_{yi}+\frac{1}{2}\sum_{j\neq i}x_{ij}f_y(r_{ij})\right],
\end{equation}
with $m_i$ the mass of ion $i$, $v_{xi}$ and $v_{yi}$ its velocity components, $x_{ij}$ the $x$-component of $\bm{r}_{ij}=\bm{r}_i-\bm{r}_{j}$, and $f_y(r_{ij})$ the $y$-component of the pair force $\bm{f}_{ij}$ on ion $i$ from ion $j$. Thermal conductivity $\lambda$ is also computed via the Green-Kubo formula:
\begin{equation}
\lambda=\lim_{\tau\to\infty}\frac{V}{3k_BT^2}\int_0^{\tau}\langle\bm{J}(0)\cdot \bm{J}(t)\rangle dt,
\end{equation}
where the heat flux $\bm{J}(t)$ is 
\begingroup
\small
\begin{equation}
\bm{J}(t)=\frac{1}{V}\left[\sum_{i=1}^N\bar{e}_i\bm{v}_i(t)+\frac{1}{2}\sum_{\substack{i,j=1 \\ i<j}}^N\left(\bm{F}_{ij}(t)\cdot(\bm{v}_i(t)+\bm{v}_j(t)\right)\bm{r}_{ij}(t)
 \right],
\end{equation}
\endgroup
with $\bar{e}_i$ the per-atom energy. As shown in Fig.~\ref{rdf1}(c-d), IRBSOG with $P_R=30$ and $P_F = 100$ yields viscosity and thermal conductivity for molten NaCl at $1100~K$ and $1500~K$ that are statistically indistinguishable from PPPM with $10^{-5}$ accuracy. This indicates that a modest batch size suffices for IRBSOG to produce accurate MD results. For reference, IRBE~\cite{liangIRBE} requires a Fourier-space batch size of $P_F=200$, twice that of the IRBSOG in simulating BMH systems, when simulating pure water and ionic liquid systems. The factor of two speedup comes from replacing the Ewald decomposition with the SOG decomposition.

\subsection{Time performance in large-scale simulations}
We compare PPPM, RBSOG and IRBSOG in LAMMPS on the molten NaCl system described above. To ensure a fair comparison, we set the target relative force error to $10^{-5}$ and the real-space cutoff to $10~\mathring{A}$ for all three methods. For RBSOG, we use $P_F=100$ as the Fourier-space batch size. For IRBSOG, we set $r_{\text{core}}=5~\mathring{\text{A}}$, $r_{\text{shell}}=10~\mathring{\text{A}}$, and batch sizes $P_R=30$ and $P_F=100$, as validated in Section~\ref{section:A}. Each run is $10000$ steps, and we report the average CPU time per step. Fig.~\ref{linear} shows the results as a function of the number of particles, using $1000$ CPU cores. All three methods show the expected closed-to-linear complexity. IRBSOG is the fastest across all tested system sizes: it achieves approximately an order-of-magnitude speedup over PPPM and is roughly twice as fast as the RBSOG method.

\begin{figure}[!ht]
	\centering	
 	\includegraphics[width=0.8\linewidth]{./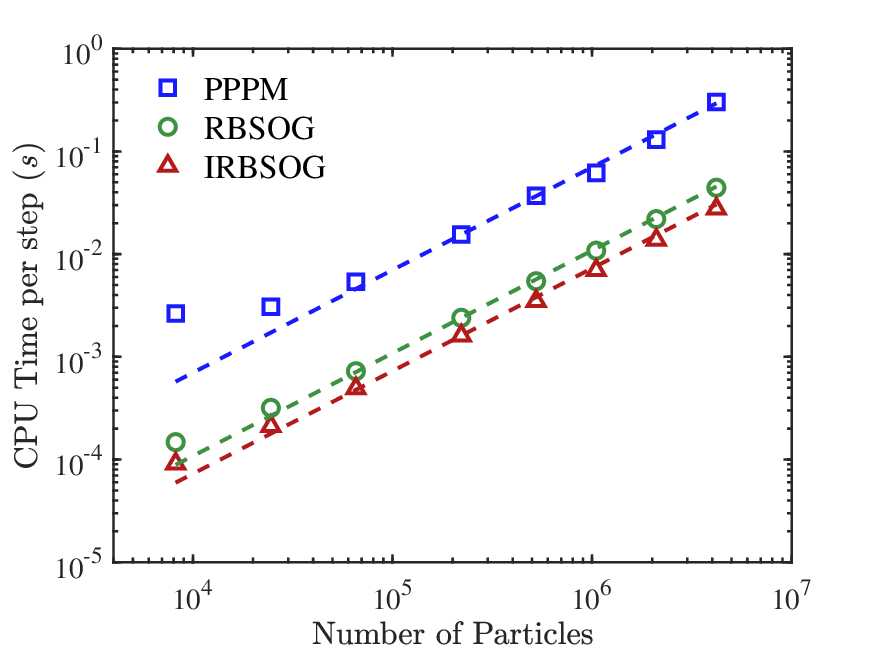}
	\captionsetup{justification=raggedright, singlelinecheck=false}
	\caption{The CPU time per step of the PPPM, RBSOG with $P_F=100$ and IRBSOG with $P_R=30$, $P_F=100$ against increasing number of particles using 1000 CPU Cores. The dashed lines show the linear fitting of data.}
	\label{linear}
\end{figure}

Next, we compare IRBSOG with PPPM on strong and weak parallel scalability, where results are shown in Figs.~\ref{cputime_strong}a-d. With one node, the per-step CPU time of IRBSOG is already much lower than that of PPPM. As node count increases, IRBSOG shows even much better strong-scaling: at $16$ nodes IRBSOG still attains about $90\%$ strong scaling, while PPPM drops to $\sim40\%$. IRBSOG also maintains near-ideal weak scaling: its efficiency stays close to $100\%$ across the tested node counts, whereas PPPM declines quickly. At $16$ nodes, IRBSOG is at $\sim98\%$ weak scaling and PPPM is below $50\%$. To test performance on large-scale systems, we measure CPU time for a NaCl system with $N=1.024\times10^6$. Using $\sim 1000$ cores, IRBSOG achieves an order-of-magnitude speedup over PPPM for the total CPU time, demonstrating its attractive efficiency and scalability.

In Table~\ref{tabrbl}, we break down the CPU time of the short- and long-range parts for IRBSOG and PPPM and report the corresponding speedup factors. With 100 CPU cores, IRBSOG attains speedups of 1.82 (short-range) and 16.34 (long-range), and the wall-clock times of these two parts are comparable within IRBSOG. When the core count increases to 1000, the short-range speedup remains similar, whereas the long-range speedup rises to 22.72, indicating lower communication overhead and better strong scaling for IRBSOG. For the short-range BMH term, we use the same cutoff, $r_c=10~\text{\AA}$, for both methods. Although increasing $r_c$ could in principle shift more work to the real-space part and better balance PPPM, it also enlarges the neighbor list and memory storage, which limits the accessible system size. Even with such tuning, our tests suggest that IRBSOG would still deliver about a 3-4$\times$ speedup while using less memory usage.

\begin{figure}[!ht]
	\centering	
	\includegraphics[width=0.8\linewidth]{./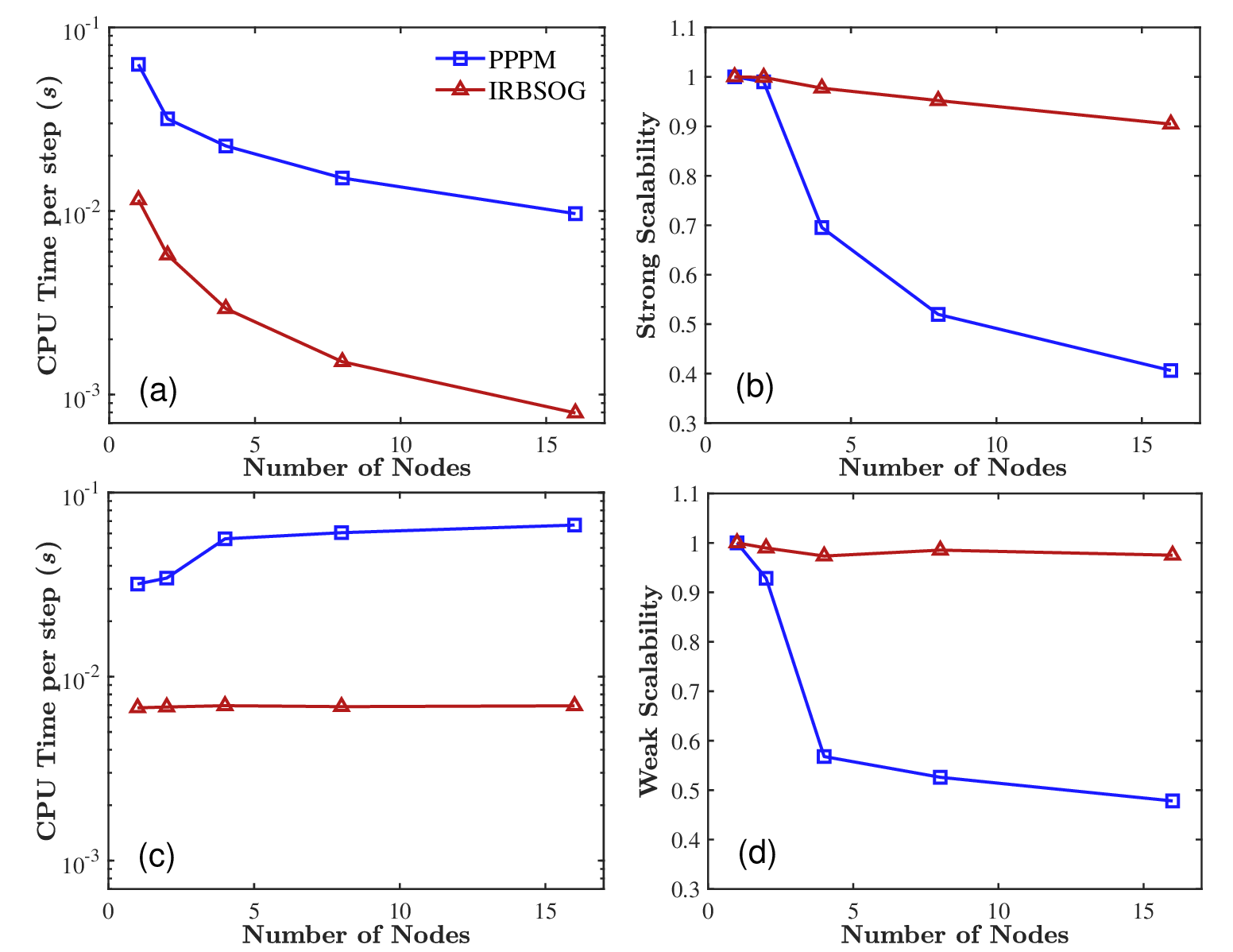}
	\captionsetup{justification=raggedright, singlelinecheck=false}
	\caption{CPU time and strong/weak scalability for the IRBSOG and PPPM, using up to \(16\) nodes with 128 CPU core per node. (a-b) present the strong scalability results with a fixed total particle number of $221184$. (c-d) present the weak scalability results with an average of $1024$ particles per core. }
	\label{cputime_strong}
\end{figure}

\begin{table}[t]
    \caption{Average CPU time (ms) per step for Pair, Kspace and Total as a function of different number of CPU cores produced by the IRBSOG ($P_R=30$ and $P_F=100$) with $r_{\text{core}}=5~\mathring{A}$ and $r_{\text{shell}}=10~\mathring{A}$, the PPPM with $r_c=10~\mathring{A}$ and the optimal PPPM with $r_c=12~\mathring{A}$ for the NaCl system with $N=1.024\times10^6$. The memory usage (Mbytes) using 100 CPU cores is also listed.}
    \setlength{\tabcolsep}{5pt}
    \renewcommand{\arraystretch}{1.2}
    \centering
        \resizebox{\textwidth}{!}{%  % 限制表格宽度为文本宽度
    \begin{tabular}{|c|c|c|c|c|c|c|c|c|c|c|c|}
        \hline
        \multicolumn{2}{|c|}{\multirow{3}{*}{}}  & \multicolumn{9}{c|}{Number of CPU Cores}  & \multirow{3}{*}{\makecell{Memory}} \\
        \cline{3-11} 
        \multicolumn{2}{|c|}{} &\multicolumn{3}{c|}{100} &\multicolumn{3}{c|}{500} &\multicolumn{3}{c|}{1000} & \multicolumn{1}{c|}{\multirow{2}{*}{}} \\
        \cline{3-11} 
        \multicolumn{2}{|c|}{}  &Pair &Kspace &Total 
        &Pair &Kspace &Total
        &Pair &Kspace &Total & \multicolumn{1}{c|}{} \\
        \hline
        \multicolumn{2}{|c|}{RBSOG}    
        &58.42 &26.01  &84.43    
        &12.91  &5.25  &18.16
        &6.37  &2.67   &9.04 &20.32
        \\ 
        \hline
        \multicolumn{2}{|c|}{PPPM} 
        &79.94 &312.02  &391.96
        &15.90 &89.75  &105.65
        &8.12 &58.62  &66.74  &49.64
        \\
        \hline
        \multicolumn{2}{|c|}{optimal PPPM} 
        &132.18 &187.11  &319.29 
        &26.43 &48.35  &74.78
        &13.52 &33.41  &46.93  &43.86
        \\
        \hline
    \end{tabular}}
    \label{tabrbl}
\end{table}

\subsection{Tests on molten alkali binary halide systems}
Molten alkali binary halides, although composed of only three ionic species, show complex thermodynamic and transport behavior beyond that of simple molten salts or dilute electrolytes. Key properties--including melting point, viscosity, and electrical conductivity--can vary nonmonotonically with composition, reflecting strong ion-ion correlations and composition-dependent structural reorganization in the melt. Accurately capturing these effects is therefore a stringent test for interatomic models used at high temperature.

To assess the performance of the IRBSOG method, we conduct MD simulations of the LiCl-NaCl mixture over a wide range of compositions. The potential parameters are summarized in Table~\ref{tab3}. For IRBSOG, we set $r_{\mathrm{core}}=5~\text{\AA}$, $r_{\mathrm{shell}}=10~\text{\AA}$, and use batch sizes $P_R=30$ and $P_F=100$. Figure~\ref{rdf3} shows the RDFs for six representative compositions. IRBSOG reproduces composition-dependent shifts in peak positions and changes in coordination features across the first and second shells, as well as medium-range correlations. These results indicate that IRBSOG maintains accuracy and transferability for chemically mixed molten salts in which cross-species interactions play a central role.

\begin{table*}[!ht]
	\caption{BMH potential parameters for molten LiCl-NaCl.}
		\setlength{\tabcolsep}{8pt}
	\renewcommand{\arraystretch}{1.3}
	\centering
	\begin{tabular}{ccccccc}
		\hline
		\  
		&Li-Li &Li-Na &Na-Na &Li-Cl &Na-Cl &Cl-Cl \\ \hline
		A$/$eV    &0.4225 &0.334 &0.2640625  &0.29046875 &0.21125  &0.1584375 \\ \hline
		$\sigma$$/$$\mathring{A}$    &1.632 &1.986 &2.340  &2.401 &2.755 &3.170 \\ \hline
		C$/$eV$\cdot\mathring{A}^6$ 	  &0.045625 &0.218875 &1.05  &1.25 &7.00  &71.328 \\  \hline
		D$/$eV$\cdot\mathring{A}^8$  	& 0.01875 &0.0968 &0.50  &1.50 & 8.6875 & 143.281\\  \hline
		$\rho$$/$$\mathring{A}$ 	  &0.3170 &0.3185  &0.3200  &0.3215 &0.3230  &0.3261 \\  \hline
	\end{tabular}
	\label{tab3}
\end{table*}

\begin{figure}[!ht]
	\centering	
        \includegraphics[width=0.8\linewidth]{./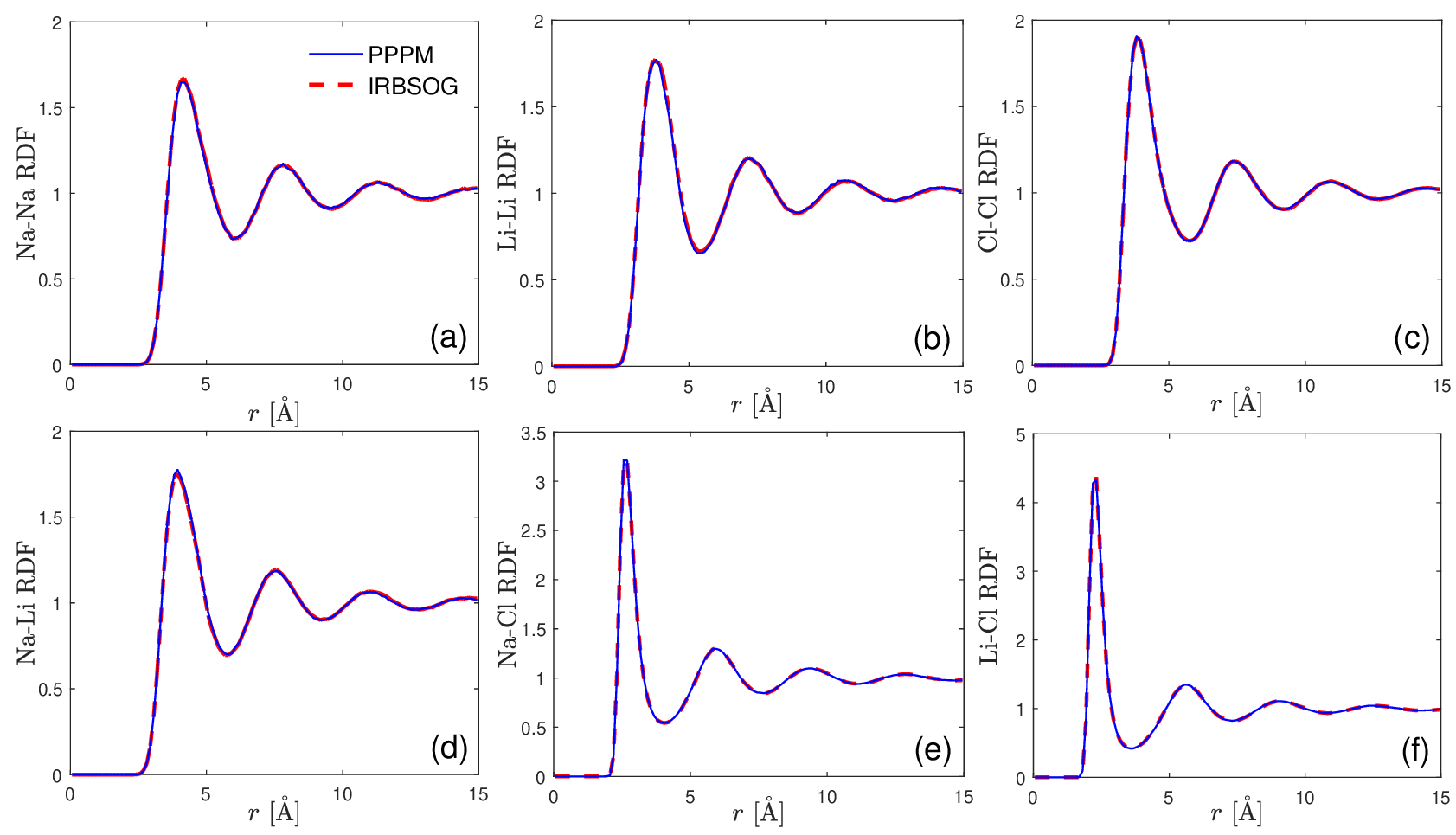}
    \captionsetup{justification=raggedright, singlelinecheck=false}
	\caption{The RDFs of Cl-Cl(a), Li-Li(b), Na-Na(c), Cl-Li(d) and Na-Cl(e), and the visual model of LiCl-NaCl(f) at 1100 K where the purple, red and green balls represent Na, Li and Cl, respectively. We use the IRBSOG with $P_R=30$ and $P_F=100$ compared to the PPPM. }
	\label{rdf3}
\end{figure}

\section{Conclusion}\label{sec:IV}
In this work, we develop an IRBSOG method for MD simulations of BMH systems. The IRBSOG framework adopts a decomposition strategy based on an SOG decomposition of the Coulomb kernel. It is a stochastic scheme that couples random mini-batch sampling for short-range neighbor interactions with random-batch importance sampling for the long-range component, leading to a total 
computational complexity $O(N)$. Compared with FFT-accelerated mesh Ewald methods and other random-batch approaches, IRBSOG exhibits improved computational efficiency and reduced memory consumption. Numerical simulations of molten NaCl and alkali binary halide systems demonstrate its accuracy and scalability. In addition, the method can be naturally extended to quasi-two-dimensional systems with planar interfaces and dielectric mismatch~\cite{gan2025random}, which will be investigated in future work.

\section*{Acknowledgments}
The work of C.C., J.L. and Z.X. was supported by the National Natural Science Foundation of China (grants No. 12426304, 12325113 and 12401570) and the Science and Technology Commission of Shanghai Municipality (grant No. 23JC1402300).  The authors would like to thank the support from the SJTU Kunpeng \& Ascend Center of Excellence.

%%%% Bibliography  %%%%%%%%%%
\bibliographystyle{ieeetr}  
\bibliography{ref}
%\bibliographystyle{elsart-num-sort}
%\bibliography{ref}
\end{document}